 \def\publish{}
\newcommand{\para}[1]{\vskip 1em\noindent\textbf{#1.}}
\newcommand{\aterm}{\langle T \rangle}
\newcommand{\sysname}{Cuttlefish\xspace}
\newcommand{\sys}{\sysname}
\newcommand{\fun}{FastUnlock\xspace}
\newcommand{\none}{\text{\textbf{None}}\xspace}
\newcommand{\unlocked}{\text{\textbf{Unlocked}}\xspace}
\newcommand{\confirmed}{\text{\textbf{Confirmed}}\xspace}
\newcommand{\noop}{\text{\textbf{No-Op}}\xspace}
\newcommand{\Cert}{\textsf{Cert}\xspace}
\newcommand{\Oid}{\textsf{ObjectId}\xspace}
\newcommand{\Okey}{\textsf{ObjectKey}\xspace}
\newcommand{\Okeys}{\textsf{ObjectKeys}\xspace}
\newcommand{\Version}{\textsf{Version}\xspace}
\newcommand{\tx}{\textsf{Tx}\xspace}
\newcommand{\cert}{\textsf{Cert}\xspace}
\newcommand{\auth}{\textsf{Auth}\xspace}
\newcommand{\unlockreq}{\textsf{UnlockRqt}\xspace}
\newcommand{\unlockvote}{\textsf{UnlockVote}\xspace}
\newcommand{\unlockcert}{\textsf{UnlockCert}\xspace}
\newcommand{\effectvote}{\textsf{EffectSign}\xspace}
\newcommand{\effectcert}{\textsf{EffectCert}\xspace}
\newcommand{\valid}[1]{\emph{valid}(#1)\xspace}
\newcommand{\sign}[1]{\emph{sign}(#1)\xspace}
\newcommand{\exec}[1]{\emph{exec}(#1)\xspace}
\newcommand{\lockdb}{\textsc{LockDb}\xspace}
\newcommand{\unlockdb}{\textsc{UnlockDb}\xspace}
\newcommand{\one}{\ding{202}\xspace}
\newcommand{\two}{\ding{203}\xspace}
\newcommand{\three}{\ding{204}\xspace}
\newcommand{\four}{\ding{205}\xspace}
\newcommand{\five}{\ding{206}\xspace}
\newcommand{\six}{\ding{207}\xspace}
\newcommand{\seven}{\ding{208}\xspace}
\newcommand{\eight}{\ding{209}\xspace}
\begin{document}

\title{
    \sys: Expressive Fast Path Blockchains with \fun
}

\ifdefined\publish
    \author{
        Lefteris Kokoris-Kogias\inst{1,2} \and
        Alberto Sonnino\inst{1,3} \and 
        George Danezis\inst{1,3}
    }
    \institute{
        MystenLabs \and
        IST Austria \and 
        University College London
    }
\else
    \author{}
    \institute{}
\fi
\maketitle

\begin{abstract}
    \sys addresses several limitations of existing consensus-less and consensus-minimized decentralized ledgers, including restricted programmability and the risk of deadlocked assets. The key insight of \sys is that consensus in blockchains is necessary due to contention, rather than multiple owners of an asset as suggested by prior work. Previous proposals proactively use consensus to prevent contention from blocking assets, taking a pessimistic approach. In contrast, \sys introduces collective objects and multi-owner transactions that can offer most of the functionality of classic blockchains when objects transacted on are not under contention. Additionally, in case of contention, \sys proposes a novel `Unlock' protocol that significantly reduces the latency of unblocking contented objects. By leveraging these features, \sys implements consensus-less protocols for a broader range of transactions, including asset swaps and multi-signature transactions, which were previously believed to require consensus.
\end{abstract}

\section{Introduction}

Consensus is not required for implementing decentralized asset transfers~\cite{guerraoui19consensus}. This insight led to the design of cryptocurrencies based on consistent or reliable broadcast~\cite{fastpay,zef,astro}, which offer several advantages. They exhibit exceptionally low latency, operate purely asynchronously, and are highly scalable.

However, consensus-less systems suffer from two significant limitations. Firstly, they have limited programmability, 
since to maintain liveness 
transactions must be submitted in a valid and race-condition-free manner. Failure to do so can result in deadlocked assets that become forever inaccessible to their owners. Thus, programmability is restricted to simple transactions involving objects owned by a single entity, such as asset transfers or payments. Attempts to support more complex transactions involving multiple users (e.g., asset swaps) or authorization (e.g., multi-signature) risk causing deadlocks, rendering assets unusable indefinitely. As a result, existing consensus-less cryptocurrencies~\cite{fastpay,zef,astro} are only suited for basic operations.

The second limitation arises from the strong requirement imposed on clients in consensus-less systems to never issue conflicting transactions. Even minor bugs in client implementations can lead to deadlocked assets. For instance, a faulty wallet that unintentionally sends a transaction with a randomized signature twice may be interpreted as two conflicting transactions, resulting in locked assets. Similar issues occur when a client underestimates the required gas for a transaction and attempts to reissue it with a higher gas amount.

To address these limitations, the recent Sui blockchain~\cite{suiL}, introduces a hybrid model that combines a consensus-less \emph{fast path} with a consensus-based fallback. Sui supports general-purpose smart contracts by utilizing broadcast in the fast path, while transactions involving shared objects at risk of race conditions are processed through the consensus fallback path. Consensus is also employed for a daily system reconfiguration which drops all the locks of the fast path and thus implicitly solves potential deadlocks. While Sui theoretically overcomes both limitations, it does so at the cost of increased latency and reduced usability. Accessing shared state necessitates sequencing transactions through consensus in all scenarios, resulting in higher latency ranging from seconds instead of fractions of a second, and prohibiting the use of parallel broadcast protocols for scalability. Furthermore, deadlocked objects are only reset and made available once a day upon reconfiguration, which proves impractical for objects that may legitimately be used by multiple non-coordinating users and get accidentally locked. The fear of losing access to resources for an entire day terrifies developers as it happens on a daily basis due to honest mistakes. As a result, developers on Sui forfeit the use of the low-latency path of Sui and fallback on implementing their smart contract using consensus~\cite{deadlock-fear}.

This paper introduces \sys, a solution that addresses both challenges mentioned above. Extending the high-level design of Sui, \sys combines a fast path reliable broadcast with a consensus fallback while enabling the execution of consensus-less transactions, without introducing additional latency unless there is actual contention.
The first contribution of \sys is the enhancement of fast path object programmability. It achieves this by introducing collective objects, which allow for complex access control and enable the execution of any type of transaction on the fast path. The transaction model is extended to support multi-owner transactions on the fast path, thereby expanding the range of programmable actions beyond single-owner operations.
The second and core contribution of \sys is the design of \fun, a mechanism that facilitates the concurrent execution of conflicting transactions on the fast path, ensuring liveness and enabling rapid unlocking of fast path objects. This advancement makes \sys suitable for a broader spectrum of transactions, including asset swaps and multi-signature transactions.
\sys is currently considered for adoption by the Sui blockchain team. 

\section{Motivating Applications of \sys}\label{sec:applications}
\sys addresses real-world needs raised by existing blockchains.

\para{Deadlocks due to multiple or buggy clients}
\sysname tackles the common challenge of locked objects in consensus-less blockchains, that leads to a poor user experience. Using multiple wallets for the same account or objects can result in concurrent conflicting transactions due to bugs, lack of synchronization, or being offline. Even, a single wallet may send a transaction with insufficient gas, only to later attempt to rectify the mistake by updating the gas value, and leading to two conflicting transaction on the same account or objects. Unfortunately, these innocent slip-ups or bugs are interpreted as equivocation attempts within the context of consistent broadcasts, potentially deadlocking the assets involved. Unlike previous solutions, \sysname enabled users to swiftly regain control of their assets through \fun and retry their transactions safely.

\para{Atomic swaps}
Atomic swaps allow two parties to exchange digital assets without the need for a trusted intermediary. While consensus-based blockchains can achieve this through smart contracts, the risk of deadlock arises in consensus-less environments due to the possibility of a Byzantine user issuing a concurrent transaction. Such a situation would effectively deadlock the assets of both parties. However, this risk only materializes when an active attacker intentionally causes contention. In rare cases like these, the \fun protocol enables participants in the swap to quickly recover their assets. This safety net allows \sysname to support multi-owner transactions in the fast path, allowing fast path atomic swaps and other multi party smart contracts, enhancing the programmability of consensus-less transactions.

\para{Regulated stablecoins}
Regulated stablecoins, require the issuer to be able to block accounts or balances for regulatory reasons, besides their owner spending them, which eludes consensus-less systems. Since multiple parties need to operate on such objects they need to use consensus to sequence these potentially conflicting operations, even though the issue nearly never execises their ability to block objects (creating no practical contention). \sysname allows for collective objects, that may be used by more than one owner, or any pattern or complex access control, and can be used in the fast path.

\section{Background} \label{sec:background}

A number of consensus-less systems have been proposed in the literature, including FastPay~\cite{fastpay}, Astro~\cite{astro}, Zef~\cite{zef}, and Linera~\cite{linera}. We will specifically describe
and extend Sui (the Sui Lutris mechanism~\cite{sui}) as a basis for the \sysname design, as it is the only currently deployed mechanism with a consensus-less fast path. \sysname extends the expresivity of both object authentication and transactions in the Sui fast path, and also extends that Sui consensus path to support \fun.

\para{Object Types} All Sui blockchain state is composed on a set of objects. There are three types of objects, and their use in a transaction determines whether the fast path or the consensus path is to be used.
\begin{itemize}
    \item \emph{Read-only objects} cannot be mutated or deleted and may be used in any type of transactions concurrently and by all users.
    \item \emph{Owned objects} have an owner field that determines access control. When owner is an address representing a public key, a transaction may access the object, if it is signed by that address (which can also be a multi-signature).
          When the owner of an object (called a child object) is another object ID (called the parent object), the child object may only be used if the root object (the first one in a tree of possibly many parents) is included in the transaction and authorized. This facility is used to construct efficient collections.
    \item \emph{Shared objects} do not specify an owner. They can instead be included in transactions by anyone, and do not require any authorization. Instead, they perform their authorization logic (enforced by the smart contract). 
\end{itemize}

\para{Transactions} A transaction is a signed command that specifies several input objects, a version number per object, and a set of parameters. If valid it consumes the input object versions and constructs a set of output objects at a fresh version---which can be the same objects at a later version or new objects. Owned objects versions need to be the latest versions in validator databases, and not be re-used across transactions. Shared objects need not specify a version, and the version on which the transaction is executed is assigned by the system. A transaction is signed by a single address and therefore can use one or more objects owned by that address. A single transaction cannot use objects owned by more than one address.

\para{Certificates}
A \emph{certificate} ($\cert$) on a transaction contains the transaction itself as well as the identifiers and signatures from a quorum of at least $2f+1$ validators.
%
A certificate may not be unique, and the same logical certificate may be signed by a different quorum of validators. However, two different valid certificates on the same transaction should be treated as representing semantically the same certificate. The identifiers of signers are included in the certificate (i.e., accountable signatures~\cite{boneh2018compact}) to identify validators ready to process the certificate, or that can serve past information required to process the certificate.


\begin{figure}[t]
    \centering
    \includegraphics[width=1\columnwidth]{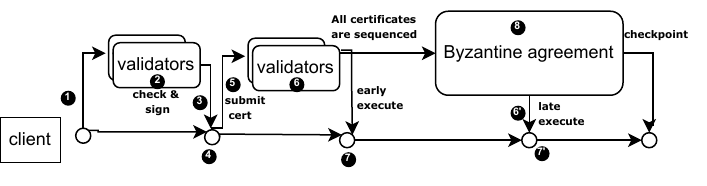}
    \caption{
        General protocol flow of Sui Lutris~\cite{suiL} fast-path \& consensus failover system.
    }
    \label{fig:overview}
\end{figure}

\para{Processing in the Fast Path and Consensus}
\Cref{fig:overview} provide an overview of Sui-Lutris and by extension Cuttlefish's common-case. A transaction is sent  by a user to all validators~(\one), that ensure it is correctly signed for all owned objects and versions, and also that all objects exist~(\two); a correct validator rejects any conflicting transaction using the same owned object versions, in the same epoch (so the first transaction using an object acquires a \emph{lock} on it). They then countersign it~(\three) and returns the signature to the user. A quorum of signatures constitutes a \emph{certificate} for the transaction~(\four). Anyone may submit the certificate to the validators~(\five) that check it.

At this point execution may take the fast path: if the certificate only references read-only and owned objects it is executed immediately~(\six) and a signature on the effects of the execution returned to the user to create an effects certificate~(\seven) and the transaction is final. If any shared objects are included execution must wait. In all cases, certificates are input into consensus and sequenced~(\eight). Once sequenced, the system assigns a common version number to shared objects for each certificate, and execution can resume (steps~\six' and~\seven') to finalize the transaction. The common sequence of certificates is also used to construct checkpoints, which are guaranteed to include all finalized transactions~(\eight).

\para{Checkpoints and Reconfiguration} \label{sec:sui-checkpoints} Sui ensures transaction finality either before consensus for owned object transactions~(\seven) and after consensus for shared object transactions~(\seven'). Its reconfiguration protocols ensure that if a transaction could have been finalized it will eventually be included in a checkpoint before the end of the epoch. At the end of the epoch, all locks are reset~(\two).
%
%
Appendix~\ref{sec:reconfiguration} summarizes the reconfiguration protocol of Sui that \sys directly adopts.

\para{Limitations of Sui}
Misconfigured clients may create and submit concurrently conflicting transactions in step~(\one) and~(\two), that reuse the same owned object versions. In that case, neither transaction may be able to construct a certificate~(\four), and the owned object becomes locked until the end of the epoch. Due to the risk that owned objects can become locked through conflicting transactions, Sui restricts transactions to only contain objects from a single owner, thus limiting the applicability of the fast path---to avoid mistrusting users from locking each others' objects for a day. For similar reasons, objects may also have at most one owner. \sysname addresses all these limitations.

\section{Overview} \label{sec:overview}
\sysname adopts the high-level design of Sui, namely using reliable broadcast for a fast path with a fall-back to consensus (see \Cref{sec:primitives} for definitions of distributed systems primitives), but augments it in the following ways.
\begin{enumerate}
    \item It allows for \emph{multi-owner transactions} on the fast path that use objects with different `owners'. In Sui this can only be expressed with shared objects in transactions using the higher-latency consensus path.
    \item It introduces \emph{collective objects} that allow for complex authorization involving different users or combinations of users, or even time or external events. Collective objects extend owned objects and may be used on the fast path.
    \item It adds a \emph{\fun protocol} that allows for fast path objects blocked due to concurrent conflicting transactions to recover liveness within seconds, whereas Sui would recover only within a day.
\end{enumerate}

Collective objects and multi-owner transactions allow for more expressive transactions in the fast path, but risk increasing the incidence of conflicting transactions and locked owned objects.
To alleviate this issue, \Cref{sec:unlock} presents a simple design for \fun: it performs a no-op on locked objects using the consensus path making them available again within seconds. \Cref{sec:multi-unlock} extends the \fun protocol to force a specific transaction instead of a no-op, which is necessary when objects are under continuous contention.

\fun leverages the consensus protocol to signal that an owned object is suspected of being under contention and should not be processed by the fast path. Following invocations, the current version of the object is blocked, and consensus is used to determine whether a transaction on it might have been final; if not, in the simple \fun the version of the object is increased with a no-op. As a result, the object has a new version that can be accessed via the fast path once again. Since the version is always updated, the transactions that blocked the object are no longer valid, removing any replay attack opportunity. This is not true in Sui as even after the end of the epoch a malicious client can resubmit the equivocated transactions and try to re-lock the object.

\para{Threat Model}
\sysname operates in the same threat model as Sui.
It assumes a message-passing system with a set of $n$ validators and a computationally bound adversary that controls the network and can corrupt up to $f < n/3$ validators within any epoch. We say that validators corrupted by the adversary are \emph{Byzantine} or \emph{faulty} and the rest are \emph{honest} or \emph{correct}.
To capture real-world networks we assume asynchronous \emph{eventually reliable} communication links among honest validators. That is, there is no bound on message delays and there is a finite but unknown number of messages that can be lost.
Similarly to Sui~\cite{sui}, \sysname additionally uses a consensus protocol as a black box that takes some valid inputs and outputs a total ordering~\cite{bullshark,danezis22narwhal,gelashvili22jolteon}, possibly operating within a partially synchronous model~\cite{dwork1988consensus}.

\section{Enhancing Programmability}
\sys provides greater objects programmability on the fast path than existing consensus-less systems using two main ingredients: (i) multi-owner transactions, and (ii) collective objects.

\para{Multi-owner transactions} Sui Lutris requires all owned objects in a transaction to be `owned' by the same address~\cite{sui}. \sys lifts this restriction: a transaction can reference owned objects with any root authenticator term. The transaction contains the authentication evidence used to authorize all objects, such as a set of signatures over the transaction, potentially from multiple addresses.

Validators must ensure that all owned objects referenced by a transactions are correctly authorized before signing a transaction, which ensures that a valid certificate represents an authorized transaction. Transactions that only contain owned objects, even when they have different owners, can be executed on the fast path. Then in addition they contain shared objects their execution needs to be deferred after the certificate has been sequenced by consensus.

Multi-owner transactions make \sysname more susceptible to owned-objects being locked through error or malicious behaviour. For example, consider an atomic swap T transaction that takes objects A owned by Alice and object B owned by Bob and exchanges their ownership. If Alice signs T first, Bob may refuse to sign initially denying Alice access to her object. If Alice loses patience and tries to use A in another transaction T', then Bob can sign T and race Alice's attempt to build a certificate. Now both T and T' contain A and conflict which can lead to a locked A (and B). To resolve such situations it is necessary for \sysname to implement \fun described in \Cref{sec:unlock}.

\para{Collective Objects} Collective objects are owned objects with a more complex authenticator, than the usual address or object ID that Sui Lutris supports. Complex authenticators allow conjunction, disjunction, and weighted thresholds thresholds of authentication terms to be used as authenticators. Authentication terms include the traditional address and object ID, but also conditions on time or events that have occured in the environment of the execution. Due to the fact that multiple non-coordinating or even mutually distrustful parties can use the object in transaction, as well as the fact that some authorization terms are non-deterministic, complex authenticators can lead to conflicting transactions being authorized on objects and thus require the \fun protocols to be practical.

More specifically \sys extends the authorization logic of an owned object to be a root authentication term $\aterm$ from the grammar in \Cref{fig:authenticators}:
\begin{figure}
    \centering
    \begin{align*}
        \aterm & := \textsc{PublicKey}(\mathit{pk})\, | \, \textsc{ObjectID}(\mathit{oid}) \\
               & := \textsc{BeforeTime}(t)\, |\, \textsc{AfterTime}(t)                     \\
               & := \textsc{EventOccured}(c, e)                                            \\
               & := \textsc{Threshold}(W, [(w_i, \aterm_i)])                               \\
               & := \bigwedge\limits_{i} \aterm_i \qquad (\textsc{And})                    \\
               & := \bigvee\limits_{i} \aterm_i \qquad (\textsc{Or})
    \end{align*}
    \caption{Grammar defining the authrorization logic for collective objects}
    \label{fig:authenticators}
\end{figure}

\begin{itemize}
    \item A $\textsc{PublicKey}$ term is true if the transaction is signed by the public key $\mathit{pk}$. Using a single such term as an authenticator for an object expresses the authentication logic of a traditional single owner object in Sui.
    \item A $\textsc{ObjectID}$ term requires the object with id $\mathit{oid}$ to be included (and authenticated) as part of the transaction. A single object id authenticator expresses the traditional parent-child relation, and ownership rules in Sui.
    \item The $\textsc{BeforeTime}$ and $\textsc{AfterTime}$ are true if the (local) time the transaction is received by the validator is respectively before or after  $t$. Note that since even honest validators cannot have perfectly synchronized clocks, it is possible that a transaction with such a term becomes `stuck'.
    \item The $\textsc{EventOccured}$ term becomes true if in the trace of finalized executions a specific event was emitted on chain $c$. Note that the chain may be different chain than the one operated by \sysname effectively making authorization conditional on an oracle for another chain. Such an event may be described by type or content and we abstract this in $e$. A reference to the transaction that emitted the event can be provided as an authenticator to help validators check this term.
    \item The $\textsc{Threshold}$ defined a threshold $W$ and a weight $w_i$ for a set of terms. It is true if the sum of weights of the true terms exceed the threshold. It allows the definition of flexible policies such as requiring a threshold of signature or other conditions to be present to authorize the object being used.
    \item The $\textsc{And}$ and $\textsc{Or}$ define a number of terms, and are true if all or any of these terms are true, respectively.
\end{itemize}
A transaction needs to provide evidence that all authenticator terms for all objects in its input set are true. For each input object it specifies the path(s) in the authentication term tree that are true supporting the overall authenticator term, collectively called \emph{authentication paths}. It also contains a set of signatures (as a list ordered by public key) signing the transaction. To allow for greater flexibility the authentication paths are not signed (conceptually they are part of the signature not the transaction), and therefore a transaction cannot get information about the logic that authorized its execution through this mechanism.

We note that an authorization path may be expressed in a very succinct manner as a one bit per $\textsc{Threshold}$, $\textsc{And}$ or $\textsc{Or}$ branch pursued to demonstrated the root authenticator term to be true. A single signature is required to satisfy any number of $\textsc{PublicKey}$ terms with the same $\mathit{pk}$. $\textsc{ObjectID}$ terms can be demonstrated as satisfied implicitly by including the $\mathit{oid}$ as an input. $\textsc{EventOccured}$, $\textsc{BeforeTime}$ and $\textsc{AfterTime}$ terms are satisfied (or not) through the validator comparing their specified time with the current time or consulting a chain for an event, and incur no additional overhead in terms of evidence in the transaction.

We represent the authenticator logic as a tree, with AND/OR, k-out-of-n connectives as branches and identities, time conditions and object IDs as leafs. In this representation, we can augment each branch and leaf with an optional nonce, compute a Merkle tree over them, and only store a hash of the root as the authenticator. In this way transfering to a complex authenticator is no different than transferring an object to an address, and one cannot tell the difference until the object is used in a transaction. A transaction then reveals only the paths necessary to show that the condition for access is satisfied. This allows objects to preserve secret authenticators until they are accessed, and even upon access only reveal the information required. We leave using zero-knowledge proofs as evidence all authenticators are satisfied in a transaction, allowing us to hide all information besides authorization, for future work.

\section{Baseline \fun Protocol} \label{sec:unlock}

Both multi-owner transactions and collective objects can result in deadlocks in the fast path when correct clients attempt to access objects concurrently. To remedy this issue \sysname introduces a \fun functionality. For simplicity, we show how to unlock a single object by either executing a preexisting transaction to finality or executing a no-op which only increases the version.
\Cref{sec:multi-unlock} extends the basic protocol to execute a new transaction instead of a no-op. Sui~\cite{sui} provides detailed specifications and implementations of its system model and \sysname largely extends it with the additional \fun protocol.


\para{New Persistent Data Structures}
Each \sysname validator maintains a set of persistent tables abstracted as key-value maps, with the usual $\mathsf{contains}$, $\mathsf{get}$, and $\mathsf{set}$ operations.
The map
$$\lockdb[\Okey] \rightarrow \Cert \text{ or } \none$$
maps each object's identifier and version, $\Okey = (\Oid, \Version)$, to a certificate $\cert$ or $\none$ if the object's version exist by the validator does not hold any certificate.
The  map
$$\unlockdb[\Okey] \rightarrow \unlocked \text{, } \confirmed \text{, or } \none$$
records whether a transaction over the specified object version is involved in a current \fun instance ($\unlocked$), has been sequenced by the consensus engine ($\confirmed$), or none of the above ($\none$).

All new owned object entries start with $\unlockdb[\Okey]$ set to \none. Once a transaction certificate is sequenced through consensus it is always executed (whether it is for a shared object transaction or an owned object only transaction) and all owned object entries have $\unlockdb[\Okey]$ set to \confirmed.

\para{\fun Protocol Description} \label{sec:unlock-protocol}
In order to safely unlock an object, the user interactively constructs a proof, called a \textit{no-commit certificate}, that no transaction modifying that object has been committed or will be committed on the fast path. This proof consists of a message signed by a quorum of validators attesting that they have not already executed a transaction over the $\Okey$, and promising that they will not execute any transaction over the $\Okey$ in the fast path. Only certificates sequenced over consensus may affect such an $\Okey$ going forward.

\begin{figure*}[t]
    \centering
    \includegraphics[width=\textwidth]{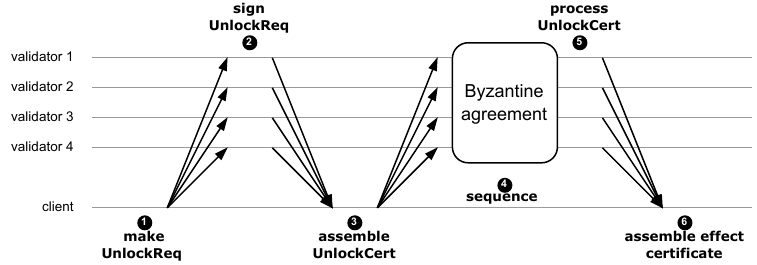}
    \caption{
        \fun interactions between a user and validators to unlock an object.
    }
    \label{fig:fast-unlock}
\end{figure*}

\Cref{fig:fast-unlock} illustrates the fast-unlock protocol allowing a user to instruct validators to unlock a specific object.
A user first creates an \emph{unlock request} specifying the object they wish to unlock:
$$\unlockreq(\Okey, \auth)$$
This message contains the object's key $\Okey$ to unlock (accessible as \\ $\unlockreq.\Okey$) and an authenticator $\auth$ ensuring the user is authorized to unlock $\Okey$.
The authenticator is composed of two parts: (i) a transaction that mutates the object in question and potentially additional objects, which is signed by the object owner, and (ii) a proof that the party requesting an unlocking can modify at least one of the objects in the transaction. The authenticator prevents rogue unlock requests for objects that are either not under contention (the transaction shows there exists a transaction that uses the object) or by parties not authorized to act on the objects.
The user broadcasts this $\unlockreq$ message to all validators~(\one).

Each validator handles the $\unlockreq$ as follows (\Cref{alg:process-unlock-tx}).
A validator first performs the following check:
\begin{itemize}
    \item \textbf{Check (\ref{alg:process-unlock-tx}.1) } It ensures the validity of $\unlockreq$ by verifying the authenticator $\auth$ with respect to the $\Okey$ to unlock. Specifically, it should contain a valid transaction including $\Okey$ and evidence that the unlock is authorized given the owner of $\Okey$. Otherwise stop  processing.
\end{itemize}
The validator attempts to retrieve a certificate $\Cert$ for a transaction on $\Okey$ exists (\textbf{Step (\ref{alg:process-unlock-tx}.2) }) or sets $\Cert$ to \none.
Then, the validator records that the object in $\unlockreq$ can only be included in transaction in the consensus path (\Cref{alg:line:block-exec}) by setting its entry in the $\unlockdb[\Okey]$ to $\unlocked$ (\textbf{Step (\ref{alg:process-unlock-tx}.3)}). It finally returns a signed \emph{unlock vote} $\unlockvote$ to the user:
$$\unlockvote(\unlockreq, \mathbf{Option}(\Cert))$$
This message contains the $\unlockreq$ itself, the (possibly \none) certificate $\cert$ leading to the execution of the object key referenced by $\unlockreq$~(\two).

\noindent
\scalebox{0.75}{
    \noindent
    \begin{minipage}[t]{0.62\textwidth}
        \begin{algorithm}[H]
    \caption{Process unlock requests}
    \label{alg:process-unlock-tx}
    \footnotesize
    \begin{algorithmic}[1]
        \Statex // Handle $\unlockreq$ messages from clients.
        \Procedure{ProcessUnlockTx}{\unlockreq}
        \State // Check (\ref{alg:process-unlock-tx}.1): Check Auth. (\Cref{sec:unlock-protocol}).
        \If{$!\valid{\unlockreq}$} \Return error \EndIf \label{alg:line:verify-auth}
        \State
        \State // Step (\ref{alg:process-unlock-tx}.2): No conflicting executions.
        \State $\Okey \gets \unlockreq.\Okey$
        \State $\Cert \gets \lockdb[\Okey]$ \Comment{Can be $\none$} \label{alg:line:check-cert}
        \State
        \State // Step (\ref{alg:process-unlock-tx}.3): Record the decision to unlock.
        \State $\unlockvote \gets \sign{\unlockreq,\Cert}$
        \State $\unlockdb[\Okey] \gets \unlocked$ \label{alg:line:block-exec}
        \State
        \State \Return $\unlockvote$
        \EndProcedure
    \end{algorithmic}
\end{algorithm}
    \end{minipage}
    \hskip 0.5em
    \begin{minipage}[t]{0.68\textwidth}
        \begin{algorithm}[H]
    \caption{Process unlock certificates}
    \label{alg:process-unlock-cert}
    \footnotesize
    \begin{algorithmic}[1]
        \Statex // Handle $\unlockcert$ message from consensus.
        \Procedure{ProcessUnlockCert}{$\unlockcert$}
        \State // Check (\ref{alg:process-unlock-cert}.1): Check if we can execute (\Cref{sec:unlock-protocol}).
        \If{$\unlockdb[\Okey] = \confirmed$}  \label{alg:line:ensure-first}
        \State \Return
        \EndIf
        \State
        \State // Check (\ref{alg:process-unlock-cert}.2): Check cert validity (\Cref{sec:unlock-protocol}).
        \If{$!\valid{\unlockcert}$} \Return error \EndIf \label{alg:line:check-unlock-cert}
        \State
        \State // Execute $\cert$ or $\none$ (\ref{alg:process-unlock-cert}.3)
        \State $\Cert \gets \unlockcert.\cert$
        \If {$\Cert \neq \none$ }
        \State $\tx \gets \Cert.\tx$
        \Else
        \State $\tx \gets \noop$
        \EndIf
        \State $\effectvote \gets \exec{\tx, \unlockcert}$ \label{alg:line:exec}
        \State
        \State // Prevent execution overwrite.
        \State $\unlockdb[\Okey] \gets \confirmed$ \label{alg:line:mark-confirmed}
        \State
        \State \Return $\effectvote$
        \EndProcedure
    \end{algorithmic}
\end{algorithm}
    \end{minipage}
}
\vskip 1em


The user collects a quorum of $2f+1$ $\unlockvote$ over the same $(\unlockreq, \Cert)$ fields and assembles them into an \emph{unlock certificate} $\unlockcert$:
$$\unlockcert(\unlockreq, \mathbf{Option}(\Cert))$$
where $\unlockreq$ is the certified abort message created by the user and $\Cert$ is the (possibly \none) certificate leading to the execution of the objects referenced by $\unlockreq$.
There are two cases leading to the creation of $\unlockcert$:
\begin{enumerate}
    \item At least one $\unlockvote$ carries a certificate. This scenario indicates that a correct validator already executed a transaction, which implies the object is not locked. However this is not a proof of finality and subsequent steps may invalidate this execution.
    \item No $\unlockvote$ carries a certificate. This scenario is a `no-commit' proof as there are $f+1$ honest validators that will not process certificates ($\unlockdb$ holds $\unlocked$) thus no certificate execution in the fast path will ever become final.
\end{enumerate}
The user submits this $\unlockcert$ for sequencing by the consensus engine~(\three).
All correct validators observe a consistent sequence of  $\unlockcert$ messages output by consensus~(\four) and process them in order as follows (\Cref{alg:process-unlock-cert}).
A validator performs the following checks, and if any fails they ignore the certificate:
\begin{itemize}
    \item \textbf{Check (\ref{alg:process-unlock-cert}.1) } They ensure they did not already process another transaction to completion (i.e.\ $\unlockdb$ is not \confirmed) or a different $\unlockcert$ for the same objects keys.  
    \item \textbf{Check (\ref{alg:process-unlock-cert}.2) } They check $\unlockcert$ is valid, that is, the validator ensures (i) it is correctly signed by a quorum of authorities, and (ii) that the certificate $\cert$ it contains is valid or $\none$.
\end{itemize}
The validator then executes the transaction referenced by $\cert$ (step \ref{alg:process-unlock-cert}.3) if one exists. Otherwise, if $\cert$ is empty, the validator undoes any local transaction executed on the object%
\footnote{The $\unlockcert$ with $\cert$ being \none ensures such an execution could not have been final; only a single layer of execution can ever be undone, and no cascading aborts can happen.}%
, then executes a no-op, that is, the object contents remain unchanged but its version number increases by one.
The validator finally marks every object key as $\confirmed$ to prevent future unlock certificates or checkpoint certificates from overwriting execution (\Cref{alg:line:mark-confirmed}) and returns an $\effectvote$ to the user~(\five).
The user assembles a quorum of $2f+1$ $\effectvote$ messages into an \emph{effect certificate} $\effectcert$ that determines finality~(\six).


Appendix~\ref{sec:gas} details the use of gas objects within the context of \fun and \ref{sec:proofs} proves the safety and liveness of the protocol. The key insight is that an $\unlockcert$ forces transactions on the owned object to go through consensus sequencing. There, either a transaction certificate or an unlock certificate will be sequence first and consistently executed. An unlock certificate for a finalized transaction will always result in the execution of the same transaction.

\para{Auto-Unlock} The basic \fun scheme presumes that the request to unlock an object is authenticated by the owner(s) of the object. This ensures that only authorized parties can interfere with the completion of a transaction, but it also restricts who can initiate unlocking in case of loss of liveness. Alternatively, an `Auto Unlock' scheme may use a synchrony assumption instead to initiate unlock: each validator upon signing a transaction associates with each input object the current timestamp. An Auto Unlock request is identical to a \fun request, but is not authenticated by the object owner. Instead, its validity is checked (checks (\ref{alg:process-unlock-tx}.1) and (\ref{alg:process-unlock-cert}.1)) by ensuring that a sufficient delay $\Delta$ has passed since the object was locked. To ensure liveness the delay $\Delta$ should be long enough to allow for the creation of transaction certificates if there is no contention. \fun and Auto Unlock can be combined: an authenticated request can be processed immediately, but an unauthenticated request is only valid after $\Delta$.

\section{Contention Mitigation} \label{sec:multi-unlock}
The basic \fun protocol speeds up recovery from loss of liveness due to mistakes. However, \sys aims to support workloads on the fast path that are truly under contention. In this case, the basic protocol in \Cref{sec:unlock} is insufficient, since it can result in multiple rounds of locking and no-op unlocking without any user transaction being committed. We present a protocol that proposes a new transaction during the unlock phase that is executed once the unlock is sequenced, ensuring liveness.
%
Additionally, we show how to generalize the basic protocol to unlock multiple objects at once.

\noindent
\scalebox{0.75}{
    \noindent
    \begin{minipage}[t]{0.62\textwidth}
        \begin{algorithm}[H]
    \caption{Process unlock requests (multi)}
    \label{alg:process-unlock-tx-multi}
    \footnotesize

    \begin{algorithmic}[1]
        \Statex // Handle $\unlockreq$ messages from clients.
        \Procedure{ProcessUnlocktTx}{$\unlockreq$}
        \State // Check (\ref{alg:process-unlock-tx-multi}.1): Check authenticator.
        \If{$!\valid{\unlockreq}$} \Return error \EndIf \label{alg:line:check-abort-multi}

        \State
        \State // Collect certificates.
        \State $c \gets \none$
        \For{$\Okey \in \unlockreq.\Okeys$}
        \State $c \gets c \cup \lockdb[\Okey]$ \label{alg:line:check-cert-multi}
        \EndFor
        \State $\unlockvote \gets \sign{\unlockreq, c}$

        \State
        \State // Record the decision to unlock.
        \If {$c==\none$}
        \For{$\Okey \in \unlockreq.\Okeys$}
        \State $\unlockdb[\Okey] \gets \unlocked$ \label{alg:line:block-exec-multi}
        \EndFor
        \State
        \EndIf
        \State \Return $\unlockvote$
        \EndProcedure
    \end{algorithmic}
\end{algorithm}
    \end{minipage}
    \hskip 0.5em
    \begin{minipage}[t]{0.68\textwidth}
        \begin{algorithm}[H]
    \caption{Process unlock certificates (multi)}
    \label{alg:process-unlock-cert-multi}
    \footnotesize
    \begin{algorithmic}[1]
        \Statex // Handle $\unlockcert$ messages from consensus.
        \Procedure{ProcessUnlockCert}{$\unlockcert$}
        \State // Check (\ref{alg:process-unlock-cert-multi}.1): Check message validity.
        \For{$\Okey \in \unlockcert.\Okeys$}
        \If{$\unlockdb[\Okey] = \confirmed$}  \label{alg:line:ensure-first-multi}
        \State \Return
        \EndIf
        \EndFor

        \State
        \State // Check (\ref{alg:process-unlock-cert-multi}.2): Check message validity.
        \If{$!\valid{\unlockcert}$} \Return error \EndIf \label{alg:line:check-abort-multi-cert}

        \State
        \State // Check (\ref{alg:process-unlock-cert-multi}.3): Can we execute the tx?
        \State $v \gets [\;]$
        \If{\unlockcert.\cert = [\;]}  \label{alg:line:no-overwrite-multi}
        \State $\tx \gets \unlockcert.\unlockreq.\tx$
        \State $\effectvote \gets \exec{\tx, \unlockcert}$ \label{alg:line:exec-tx-multi}
        \State $v \gets \effectvote$
        \For{$\Okey \in \unlockcert.\Okeys$}
        \State $\unlockdb[\Okey] = \confirmed$
        \EndFor
        \Else
        \For{$\Cert \in \unlockcert.\cert$}
        \State $\effectvote \gets \exec{\Cert}$ \label{alg:line:exec-multi}
        \State $v \gets v \cup \effectvote$
        \For{$\Okey \in \cert.\Okeys$}
        \State $\unlockdb[\Okey] = \confirmed$ \label{alg:line:mark-confirmed-multi} 
        \EndFor
        \EndFor
        \EndIf

        \State \Return $v$
        \EndProcedure
    \end{algorithmic}
\end{algorithm}
    \end{minipage}
}
\vskip 1em

The multi-objects unlock protocol follows the same general flow as the single-object unlock protocol described in \Cref{sec:unlock}. We thus describe the protocol referring to the steps \one-\six depicted in \Cref{fig:fast-unlock}.

\para{Protocol description}
The user first creates an \emph{unlock request} specifying a set of objects to unlock:
$$\unlockreq([\Okey], \tx, \auth)$$
This message contains a list of the object's keys $[\Okey]$ to unlock (accessible as $\unlockreq.\Okeys$), a new transaction $\tx$ to execute if the unlock process succeeds, and an authenticator $\auth$ ensuring the sender is authorized to access all objects in $[\Okey]$. The user broadcasts this message to all validators~(\one).

\Cref{alg:process-unlock-tx-multi} describes how each validator handles this unlock request $\unlockreq$. They first perform Check (\ref{alg:process-unlock-tx-multi}.1) \Cref{alg:line:check-abort-multi} to check the authenticator $\auth$ is valid with respect to all objects. This check ensures that the user is authorized to mutate all the objects referenced by $\unlockreq$ and to lock all owned object referenced by $\tx$.
The validator then collects any certificates for the objects referenced by $\unlockreq$ (\Cref{alg:line:check-cert-multi}) and adds them to the response as \Cert. The validator then marks object in $\unlockreq$ as reserved for transaction executed through consensus only (\Cref{alg:line:block-exec-multi}).

The validator finally returns an \emph{unlock vote} $\unlockvote$ to the user:
$$\unlockvote(\unlockreq, [\mathbf{Option}(\Cert)])$$
This message contains the unlock message $\unlockreq$ itself and possibly a set of certificates $[\cert]$ on transactions including the object keys referenced by $\unlockreq$ (possible empty)~(\two). If \Cert is not empty the certified transactions may have been finalized, and should be executed instead of the new transaction.

The user collects a quorum of $2f+1$ $\unlockvote$ over the same $\unlockreq$ message and assembles them into an \emph{unlock certificate} $\unlockcert$:
$$\unlockcert(\unlockreq, \Cert)$$
where $\unlockreq$ is the user-created certified unlock message and $U \Cert$ is the unions of all set of certificates received in $\unlockreq$ responses. The user submits this message to the consensus engine~(\three)
The consensus engine sequences all $\unlockcert$ messages; all correct validators observe the same output sequence~(\four).

\Cref{alg:process-unlock-cert-multi} describes how validators process these $\unlockcert$ messages after they are sequenced by the consensus engine.
The validator first ensures they did not already process another $\unlockcert$ or $\cert$ through checkpoint for the same objects keys (\Cref{alg:line:ensure-first-multi}).
They then check $\unlockcert$ is valid, that is, the validator ensures (i) it is correctly signed by a quorum of authorities, and (ii) that all certificates $[\cert]$ it contains are valid (\Cref{alg:line:check-abort-multi-cert}).
The validator can only execute the transaction $\tx$ specified by the user if $\unlockcert.\cert$ is empty (\Cref{alg:line:no-overwrite-multi}).
%
The validator then marks every object key of $[\Okey]$ as $\confirmed$ to prevent any future unlock requests on the $\Okey$ from overwriting execution with a different transaction (\Cref{alg:line:mark-confirmed-multi})
and returns a set of $\effectvote$ to the user~(\five).

The user assembles an $\effectvote$ from a quorum of $2f+1$ validators into an \emph{effect certificate} $\effectcert$ that determines finality~(\six).

\section{Related and Future work}
The \sysname's fast path is based on Byzantine consistent broadcast~\cite{cachinBook}. Previous works suggested using this weaker primitive to build payment systems~\cite{GuerraouiKMPS19,DBLP:journals/corr/abs-1812-10844,astro,fastpay,zef,brick,sui} or even as an exclusion-based locking mechanism for optimistic state-machine replication~\cite{hendricks2010zzyzx}. Zzyzx specifically uses a two-mode unlock mechanism that checks if all replicas have a matching history over the object and retracts the lock or runs full consensus to find the best state to adopt. Unlike Zzyzx, \sysname provides the machinery to not only abort but also directly execute a new transaction and exploits the idea of shared objects to allow for easy execution when there is true contention. Addtionally, \sysname comes with a full set of proofs.

\Cref{sec:background} extensively discussed Sui~\cite{sui}, the closest systems to \sys. Notably, Sui includes a restricted variant of multi-owner transactions to support sponsored transactions, and a restricted variant of complex authenticators allowing only weighted thresholds of signatures as an authenticator.

Sui additionally, supports a batch execution mechanism called Programmable Transaction Blocks (PTB). In a PTB a user can bundle multiple of their transactions together for execution and allows for a significant increase in operations per second Sui can process. Unfortunately, this is currently only available for a single owner largely due to the risk of deadlocks if one of the bundled operations is under a race condition. With \sysname we envision providing this significant advantage in terms of throughput efficiency for general-purpose workloads as dapp operators will be able to bundle transactions of many users in a single certificate workflow knowing that if something goes wrong, they could invoke \fun and seamlessly regain liveness.

%
Another closely related work is FastPay which implements a payment system using a Byzantine consistent broadcast primitive and a lazy synchronizer to achieve \emph{totality}~\cite{cachinBook}. Zef combines FastPay with the Coconut anonymous credentials scheme~\cite{coconut} to enable confidential and unlinkable payments.
Astro relies on an eager implementation of \emph{Byzantine reliable broadcast}~\cite{cachinBook} to achieve totality without relying on an external synchronizer at the cost of higher communication in the common case. Similarly, ABC~\cite{abc} proposes a relaxed notion of consensus where termination is only guaranteed for honest users.
All these systems lack an integration with a consensus path making them both impractical to run for a long-time (no garbage-collection or reconfiguration) as well as limited functionality (only payments) and usability (client-side bugs result in permanent loss of funds). If integrated, then \sysname would apply directly to allow more use-cases on the low latency consensusless path without the risk of locking assets forever due to race conditions.

\section{Conclusion}

\sys proposes a novel approach to decentralized ledgers that addresses the shortcomings of previous consensus-minimized systems. By realizing that the requirement for consensus in blockchains is driven by contention rather than the number of owners, \sys challenges traditional wisdom and provides an alternative perspective. When objects are not under contention, the use of collective objects and multi-owner transactions, combined with the right authentication mechanism enables \sys to give the majority of the functionality seen in traditional blockchains within two round-trips of communication. To properly deal with deadlock when the objects are under contention, \sys proposes the novel \fun protocol allowing users to quickly regain access to locked assets.
As a result, \sys allows for the consensus-less execution of a broader set of transactions, including asset swaps and multi-sig transactions that were previously thought to need consensus.

\ifdefined\publish
    \section*{Acknowledgment}
This work is supported by Mysten Labs. We thank the Mysten Labs Engineering teams for valuable feedback broadly, and specifically to Xun li and Mark Logan for advising on a design that would best fit the Sui codebase.
\fi

\bibliographystyle{plain}
\bibliography{references}

\appendix
\section{Distributed Systems Primitives} \label{sec:primitives}
This section presents the fundamental definitions of the distributed primitives we use as black boxes.

\para{Consensus} Consensus is the process of agreeing on a value or decision among a group of nodes, each of which has its local input and can communicate with other nodes over a network.
Consensus protocols satisfy the following properties:
\begin{itemize}
    \item{Termination:} Eventually, every honest node decides on a value.
    \item{Agreement:} All honest nodes decide on the same value.
    \item{Validity:} If all nodes have the same input value, then any node that decides must decide on that value.
    \item{Integrity:} Nodes only decide on values proposed by some nodes.
\end{itemize}

\para{State Machine Replication}
A Byzantine fault-tolerant state machine replication protocol commits client transactions into a sequential log akin to a single non-faulty server, and provides the following two guarantees:
\begin{itemize}
    \item Safety: Any two honest replicas that commit a transaction at a log position commit the same transaction.
    \item Liveness: A transaction submitted through an honest replica is eventually committed by all honest replicas at some log position.
\end{itemize}
SMR typically uses a consensus instance per log position. In this work, we use the term SMR and consensus interchangeably and assume a black box construction. Any type of consensus that provides safety under asynchrony is sufficient, whether classic~\cite{castro99practical,gelashvili22jolteon} or DAG-based~\cite{danezis22narwhal, bullshark, keidar21all}.

\para{Reliable Broadcast} Reliable Broadcast is a weaker version of consensus that provides liveness only in the presence of an honest \emph{source} node that drives the protocol to completion. As we will discuss later in the paper this is sufficient for the safety and liveness of transactions that do not experience contention on the objects they operate. Unfortunately in the presence of a faulty or buggy source node objects can lose liveness forever.
A reliable broadcast algorithm should satisfy the following properties:
\begin{itemize}
    \item{Validity:} If an honest node broadcasts a message, then every honest node eventually delivers that message.
    \item{Integrity:} If an honest node delivers a message, then that message was previously broadcast by the source.
    \item{Agreement:} If a honest node delivers a message $m$, then every honest node delivers $m$.
\end{itemize}
\section{Epoch Change} \label{sec:reconfiguration}
Sui divides time into a sequence of \emph{epochs}, each comprising an approximately equal number of checkpoints. At the end of each epoch, validators release all their locks to allow users with equivocated objects to regain access to them in the next epoch\footnote{Transactions are valid only within the epoch they were signed to prevent replays.}. However, this design choice introduces a potential risk: what happens to transactions that were in progress during the epoch change?

To mitigate this risk, Sui implements an incremental epoch change process. As a first step, validators pause transaction processing and focus solely on generating checkpoints. This ensures that validators can eventually terminate once all executed transactions are secure. In a second step, validators submit all the certificates they have executed for checkpointing. Once all the certificates known to the validator have been checkpointed, the final step is for the validator to send an \textit{end-of-epoch} message to be sequenced by the consensus protocol. The epoch change is considered complete when $2f+1$ \textit{end-of-epoch} messages are sequenced.

The safety of transactions in progress right before the start of the epoch change process is guaranteed by the requirement that an honest validator must not send an \textit{end-of-epoch} message until all the certificates it executed are sequenced (either by itself or others inserting them into the consensus protocol). This ensures that if a transaction has been finalized in the fast path, at least one honest validator will either include it in a checkpoint or prevent the epoch from completing. As a result, no finalized transaction is reverted or `forgotten' during the epoch change. The preservation of this property can be extended to apply directly to the unlock transactions.

\section{Handling Gas Objects} \label{sec:gas}
Typical transactions not only mutate objects but also consume a gas object to spend for the computation. If, however, the transaction is locked then this gas is blocked as well. For this reason \sysname requires a fresh gas-object in order for consensus to process the unlock request.
Specifically together with~\Cref{alg:process-unlock-tx}, the parties should provide a fresh gas object for their request. This gas object is checked for validity along with the check in~\Cref{alg:line:verify-auth} and locked for the unlock transaction in~\Cref{alg:line:block-exec}. When the user collects the no-commit proof in the second step of the protocol, the $2f+1$ collected signatures also serve as a certificate for the gas object. The consensus then checks the validity of the certificate and spends it locally before entering~\Cref{alg:process-unlock-cert}.
Then when consensus executes the transaction three scenarios may happen:

\begin{itemize}
    \item The unlock request is valid and includes a certificate. Then the execution happens as usual and both the gas object for the unlock and the gas object for the execution are consumed.
    \item The unlock requests is valid and comes with a no-op. Then the gas object for unlock is consumed. If there was some locked transaction racing the \fun then the accompanying gas object is potentially blocked. The user can then explicitly unlock that gas object by running \fun.
    \item The unlock request is not processed because a checkpoint certificate already executed a transaction. Then the gas object is still consumed without altering the state of the $\Okey$.
\end{itemize}
\section{Security Arguments} \label{sec:proofs}

We argue about the safety and liveness of \sysname.
Intuitively, \sysname does not invalidate the finality guarantees of the normal fast path operations. That is, a client holding an effect certificate can be assured that its transaction will not be reverted.

\begin{theorem}\label{th:client-safety}
    If there exists an effect certificate $\effectcert$ over a transaction $\tx$, the execution of $\tx$ is never reverted.
\end{theorem}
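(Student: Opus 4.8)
The plan is to argue by contradiction, enumerating all the ways an $\effectcert$ over $\tx$ can come into existence and all the ways an already-executed transaction can be \emph{reverted}, and showing no combination is possible. An $\effectcert$ over $\tx$ means a quorum $Q_1$ of $2f{+}1$ validators each executed $\tx$ and returned an $\effectvote$. Execution of $\tx$ happens in exactly one of three ways: (i) on the fast path after a validator accepts a transaction certificate $\cert$ on $\tx$ (step~\six); (ii) after $\cert$ is sequenced by consensus (step~\six'); or (iii) inside \textsc{ProcessUnlockCert}, either because an $\unlockcert$ carried a certificate on $\tx$ (\Cref{alg:line:exec} or \Cref{alg:line:exec-multi}) or because $\tx$ was the fresh transaction attached to a multi-object $\unlockcert$ whose certificate set was empty (\Cref{alg:line:exec-tx-multi}). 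Symmetrically, the only mechanism that reverts a previously executed transaction is processing an $\unlockcert$ whose certificate field is $\none$/empty: such a certificate makes validators ``undo any local transaction executed on the object'' and run a \noop (or, in the multi case, a different $\tx'$). A revert caused by a \emph{conflicting} certificate is impossible: a certificate on $\tx$ witnesses that $2f{+}1$ validators locked the owned object versions consumed by $\tx$ to $\tx$, so by quorum intersection no conflicting transaction on those versions can ever gather $2f{+}1$ signatures within the epoch; hence only an ``empty'' $\unlockcert$ could disturb $\tx$'s execution.

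It therefore suffices to show that no $\unlockcert$ with an empty certificate field over an owned object version $\Okey$ used by $\tx$ can coexist with an $\effectcert$ over $\tx$. This is the crux, and the main obstacle, because in case (i) the $\effectcert$ is formed \emph{without} consensus while the $\unlockcert$ is sequenced, so I cannot simply appeal to consensus ordering; instead I use quorum intersection plus a timing argument at the honest validator it yields. Let $Q_2$ be the $2f{+}1$ validators whose $\unlockvote$s — all carrying $\none$ — form the empty $\unlockcert$; by \textsc{ProcessUnlockTx} each of them sets $\unlockdb[\Okey] \gets \unlocked$ (\Cref{alg:line:block-exec}). Pick an honest $v \in Q_1 \cap Q_2$ (the intersection has size $\geq f{+}1$, hence contains an honest party). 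If $v$ handled the $\unlockreq$ before executing $\tx$, then afterwards $v$ had $\unlockdb[\Okey] = \unlocked$, which by construction bars $v$ from ever executing a certificate on $\Okey$ outside consensus — contradicting $v \in Q_1$ executing $\tx$ on the fast path. If instead $v$ executed $\tx$ first, then $v$ had already stored $\cert$ under $\Okey$ in $\lockdb$ when it later processed the $\unlockreq$, so its $\unlockvote$ would carry $\cert \neq \none$ — contradicting emptiness of the $\unlockcert$. Cases (ii)–(iii) are easier: there the $\effectvote$s are emitted only after $\unlockdb[\Okey] \gets \confirmed$, and since all honest validators process sequenced certificates in the same consensus order, every later $\unlockcert$ on $\Okey$ is rejected at Check~(\ref{alg:process-unlock-cert}.1), so nothing is reverted.

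Finally I would close two loose ends. First, the \emph{non-empty} $\unlockcert$ case: if $\tx$ has a certificate $\cert$ and some $\unlockcert$ on $\Okey$ carries a certificate, that certificate must be on $\tx$ itself (again by the no-conflicting-certificate argument), so \textsc{ProcessUnlockCert} merely re-executes $\tx$, reproducing identical effects — not a revert. Second, the ``single layer of undo'' concern noted in the footnote to \Cref{alg:process-unlock-cert}: an empty $\unlockcert$ can only undo a fast-path execution, and we have just shown finality of $\tx$ is incompatible with an empty $\unlockcert$ on any version $\tx$ consumes, so undoing never reaches a finalized transaction and no cascade occurs. I expect most of the care to go into the timing/quorum-intersection step above and into being precise about which version ($\Okey$ as an input versus output of $\tx$) the $\lockdb$ and $\unlockdb$ entries refer to; the remainder is bookkeeping over the enumerated cases. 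Gas-object handling (Appendix~\ref{sec:gas}) is orthogonal and does not affect the argument.
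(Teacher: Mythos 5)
Your proposal is correct and follows essentially the same route as the paper's proof: both reduce the claim to the impossibility of an empty-certificate $\unlockcert$ coexisting with an $\effectcert$ over $\tx$, and both derive the contradiction from quorum arithmetic over the two $2f{+}1$ quorums (the paper counts $f{+}1+f{+}1+f>3f{+}1$; you intersect the quorums to find an honest validator in both). Your explicit before/after case analysis at the intersecting honest validator makes rigorous a step the paper leaves implicit, and your treatment of the non-empty-certificate and consensus-path cases is consistent with what the paper defers to its second theorem.
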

\begin{proof}
    We assume that the execution of \tx is reverted and lead to a contradiction. The transaction cannot be reverted at the end of the epoch as it will contradict the properties we inherit from Sui which \sysname did not modify. Hence, the transaction can only be reverted if there exist an $\unlockcert$ over an $\Okey$  modified by $\tx$. For this to happen there should be an $\unlockcert$ over that $\Okey$ carrying an empty certificate.
    From Check (\ref{alg:process-unlock-tx}.2) of \Cref{alg:process-unlock-tx} a correct validator only provides and $\unlockvote$ with an empty $\cert$ if it has not executed anything for $\Okey$. From our assumption that $\Okey$ did admit a no-op there should be $f+1$ honest validators that did not partake in the generation of the $\effectcert$ of $\tx$ and hence passed the check. Additionally, for the $\effectcert$ to exist by definition it has $2f+1$ signatories over the $\Okey$ in question, at least $f+1$ of them being honest.  This implies a total of at least $f+1 + f+1 + f = 3f+2 > 3f+1$ validators, hence a contradiction.
\end{proof}

The dual also applies meaning that if an $\unlockcert$ exists then no $\effectcert$ over the $\Okey$ will be generated in the fast path. The proof analogue by add an extra check during the $\effectcert$ generation that correct validators refuse to process certificates when the recorded $\unlocked$ in their $\unlockdb$.

Next we show that validators that might process on the consensus path both a $\cert$ (through checkpointing) and $\unlockcert$ will arrive at the same execution result. We prove the case where an $\unlockcert$ is ordered first. For this, we need to enhance the protocol of checkpointing in Sui to check the value of $\unlockdb[\Okey]$ and ignore a $cert$ that tries to process a $\confirmed$ $Okey$, which is a straightforward change.

\begin{theorem}
    If a correct validator executes an $\unlockcert$ certificate over $\Okey$ as sequenced by the SMR engine,
    no correct validator will subsequently execute a conflicting a $\cert$ as sequenced by the SMR engine .
\end{theorem}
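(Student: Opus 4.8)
The plan is to argue by contradiction on the total order produced by the SMR engine, exploiting the fact that once an $\unlockcert$ over $\Okey$ is executed, every correct validator has $\unlockdb[\Okey] = \confirmed$, and that both the checkpointing path and the $\unlockcert$ path consult this flag before acting. First I would fix a correct validator $v$ that executes the $\unlockcert$ over $\Okey$ at some position in the sequenced log, and suppose for contradiction that some correct validator $w$ subsequently executes a conflicting $\cert$ (necessarily coming through checkpointing, since a normal fast-path certificate on an owned object that has been unlocked can no longer be re-sequenced as a fresh fast-path transaction) that mutates $\Okey$. The key observation is that "subsequently" is with respect to the common total order delivered by consensus: by Agreement of the SMR engine, $w$ also delivers the $\unlockcert$, and delivers it \emph{before} the $\cert$ in question.

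The second step is to track the state of $\unlockdb[\Okey]$ at $w$. When $w$ processes the $\unlockcert$ (which it does before the conflicting $\cert$, by the order argument), either it had already set $\unlockdb[\Okey] = \confirmed$ and returns early at Check~(\ref{alg:process-unlock-cert}.1) without harm, or it executes the $\unlockcert$ and sets $\unlockdb[\Okey] = \confirmed$ at \Cref{alg:line:mark-confirmed}. In either case, \emph{after} $w$ has handled the $\unlockcert$, we have $\unlockdb[\Okey] = \confirmed$ at $w$, and this value is monotone (no step of the protocol ever resets $\confirmed$ back to $\unlocked$ or $\none$ within an epoch). Then when $w$ later reaches the conflicting $\cert$ through checkpointing, the enhanced checkpoint-processing rule (the "straightforward change" mentioned just before the theorem: refuse to process a $\cert$ touching a $\confirmed$ $\Okey$) causes $w$ to ignore it. Hence $w$ does not execute the conflicting $\cert$, contradicting the assumption.

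A subtlety I would be careful to address is what "conflicting" and "execute" mean when the $\unlockcert$ itself carries a non-empty $\cert$: in that case the $\unlockcert$ \emph{is} the execution of that very transaction, and a later checkpointed $\cert$ on the same transaction is semantically the same certificate (as noted in the Certificates paragraph), so there is no conflict to speak of; the interesting case is the $\none$/no-op branch, where the $\unlockcert$ bumps the version of $\Okey$ and any checkpointed $\cert$ using the \emph{old} version of $\Okey$ is stale. I would note that such a stale $\cert$ is exactly what the $\confirmed$ check filters out, and that — by Theorem~\ref{th:client-safety} and its dual remarked immediately after it — the no-op branch could only have been taken if no $\effectcert$ over that version of $\Okey$ existed, so nothing final is being discarded. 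I would also discharge the boundary case where $w = v$: a single correct validator processes the sequenced log in order, sets $\confirmed$ on executing the $\unlockcert$, and thereafter rejects the conflicting $\cert$ by the same check.

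The main obstacle is not the core counting argument (which is immediate once the $\confirmed$ flag is in place) but rather making the case analysis on the \emph{form} of the two certificates airtight: one must rule out that a genuinely conflicting, genuinely-finalizable transaction certificate could be sequenced after the $\unlockcert$. This reduces to the invariant that, after an $\unlockreq$ is answered, $f+1$ honest validators hold $\unlockdb[\Okey] = \unlocked$ and will therefore never sign a fast-path certificate over $\Okey$ (the dual property stated after Theorem~\ref{th:client-safety}), so no \emph{new} fast-path certificate over that version of $\Okey$ can ever be formed; any $\cert$ reaching $w$ via checkpointing must then be one that predates the unlock, hence is handled by the version-staleness / $\confirmed$ argument above. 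I would state this invariant explicitly as the linchpin of the proof and only then invoke Agreement of consensus to line up the delivery order and conclude.
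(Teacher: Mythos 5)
Your proposal is correct and follows essentially the same route as the paper: invoke the total-order (safety/agreement) property of the SMR engine so that every correct validator delivers the $\unlockcert$ before the conflicting $\cert$, observe that processing the $\unlockcert$ sets $\unlockdb[\Okey]$ to $\confirmed$, and conclude that Check~(\ref{alg:process-unlock-cert}.1) together with its dual in the enhanced checkpoint algorithm rejects the later $\cert$. The extra care you take with the monotonicity of the $\confirmed$ flag, the non-empty-$\cert$ case, and the impossibility of forming a fresh fast-path certificate is a welcome elaboration of details the paper leaves implicit, but it does not change the underlying argument.
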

\begin{proof}
    The proof directly follows from the safety property of the SMR engine that all validators will process certificates in the same order.
    Hence, upon processing $\unlockcert$, all honest validators mark the execution of $\Okey$ as confirmed by setting $\unlockdb[\Okey] \gets \confirmed$ (\Cref{alg:line:mark-confirmed} of \Cref{alg:process-unlock-cert}). Then, check (\ref{alg:process-unlock-cert}.1) of \Cref{alg:process-unlock-cert} (and its dual added at the checkpoint algorithm) ensures that if any further $\cert$ or $\unlockcert$ with a conflict is given as input to the execution engine it is rejected.
\end{proof}

The dual can be proven in the same manner since we enhance the execution of $\cert$ during the checkpoint process with updating $\unlockdb[\Okey] \gets \confirmed$ after processing. Then all $\unlockcert$ on the $\Okey$ will be rejected at the Check (\ref{alg:process-unlock-cert}.1) of \Cref{alg:process-unlock-cert}.

\para{Liveness argument}
Intuitively, we argue that \sysname---and its composition with normal fast path operations---neither deadlocks nor enables unjustified aborts (which could starve an object from progress).

\begin{lemma}[Unlock Certificate Availability] \label{th:certificate-creation}
    A correct user can obtain an unlock certificate $\unlockcert$ over a valid $\Okey$.
\end{lemma}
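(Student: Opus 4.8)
The plan is to show that a correct user can always drive \Cref{alg:process-unlock-tx} to completion at a quorum of validators and then aggregate their replies into a well-formed $\unlockcert$, irrespective of the (unknown) state of the object.

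First I would establish \emph{acceptance}. Since the user is correct and authorized to unlock $\Okey$, it can build an $\unlockreq(\Okey,\auth)$ whose authenticator bundles a transaction mutating $\Okey$ signed by its owner together with the required authorization proof, so Check~(\ref{alg:process-unlock-tx}.1) succeeds at every honest validator. Inspecting \Cref{alg:process-unlock-tx}, once that check passes there is no remaining blocking branch: the validator reads $\lockdb[\Okey]$ (possibly $\none$), emits a signed $\unlockvote(\unlockreq,\mathbf{Option}(\Cert))$, and records $\unlockdb[\Okey]\gets\unlocked$. Hence every honest validator that receives the request replies.

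Next I would establish \emph{quorum availability and aggregation}. There are $n-f\ge 2f+1$ honest validators; because the user keeps re-transmitting over the eventually-reliable links, each of them eventually receives $\unlockreq$ and its $\unlockvote$ eventually reaches the user, so the user eventually holds at least $2f+1$ $\unlockvote$ messages for this $\unlockreq$. Fix any $2f+1$ of them. Either none carries a certificate---then this set is, by definition, a no-commit proof and yields a valid $\unlockcert(\unlockreq,\none)$---or at least one carries a certificate $\Cert$, which the user attaches to obtain $\unlockcert(\unlockreq,\Cert)$. These two cases are exhaustive for \emph{whatever} $2f+1$-subset the user happens to gather, so the user is never stuck holding a quorum that is neither a no-commit proof nor anchored to a certificate.

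The one non-routine point, which I expect to be the main obstacle, is arguing in the second case that the choice of $\Cert$ is unambiguous, i.e.\ that all certificates appearing across honest replies are on the \emph{same} transaction (and hence, by the background convention, semantically the same certificate). This follows from quorum intersection: a certificate over version $\Okey$ carries $2f+1$ signatures, at least $f+1$ of them honest; two certificates over the same $\Okey$ would share an honest signer, but in an epoch an honest validator signs at most the first transaction it sees for a given owned-object version (step~\two of the fast path), so the two certified transactions must coincide. Two minor loose ends remain: if $\unlockdb[\Okey]$ is already $\confirmed$ at some honest validators because a transaction finalized through consensus, the statement should read ``valid $\Okey$'' as a version not yet confirmed (or simply observe that such validators still return an $\unlockvote$ carrying the executed certificate, which only helps aggregation); and for the Auto-Unlock variant Check~(\ref{alg:process-unlock-tx}.1) additionally requires the delay $\Delta$ to have elapsed, which I would carry as an explicit hypothesis.
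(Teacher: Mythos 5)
Your proposal is correct and follows essentially the same route as the paper's proof: honest validators unconditionally return an $\unlockvote$ once Check~(\ref{alg:process-unlock-tx}.1) passes, the $2f+1$ honest validators eventually all reply over eventually-reliable links, and the user aggregates the replies into an $\unlockcert$. Your additional quorum-intersection argument that all returned certificates must be over the same transaction, and your handling of the mixed (some-$\Cert$/some-$\none$) quorum, go beyond the paper's terse three-line proof but only tighten details the paper leaves implicit rather than changing the approach.
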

\begin{proof}
    A correct validator always signs $\unlockvote$ if it passes the check of \Cref{alg:process-unlock-tx}.
    Well formed $\unlockreq$ always come with a valid authentication path (Check (\ref{alg:process-unlock-tx}.1)), and Check (\ref{alg:process-unlock-tx}.2) always returns an $\unlockvote$.
    As a result, if $\unlockreq$ is disseminated to $2f+1$ correct validators by a correct user, they will eventually all return an $\unlockvote$. The user then aggregates those votes into a unlock certificate $\unlockcert$ over $\Okey$.
\end{proof}

\begin{theorem}[\sys Liveness] \label{th:simple-unlock-liveness}
    If a correct and authorized user initiates a fast-unlock protocol, the \Okey in question will eventually admit a new transaction.
\end{theorem}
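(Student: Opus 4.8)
The plan is to establish \Cref{th:simple-unlock-liveness} by tracing the protocol of \Cref{sec:unlock-protocol} step by step and showing that each message in the sequence \one--\six is eventually produced when the initiating user is correct and authorized. First I would invoke \Cref{th:certificate-creation} to guarantee that the user can assemble an unlock certificate $\unlockcert$ over the target $\Okey$: since the user is authorized, their $\unlockreq$ carries a valid authenticator, so Check (\ref{alg:process-unlock-tx}.1) passes at every correct validator, and Step (\ref{alg:process-unlock-tx}.2) always yields an $\unlockvote$; eventual-delivery of the broadcast to the $2f+1$ correct validators then gives the user a quorum of votes over a common $(\unlockreq, \Cert)$ pair. (A small subtlety worth flagging: the user must receive $2f+1$ votes agreeing on the same $\mathbf{Option}(\Cert)$ value; I would argue that a correct user can always pick a consistent subset --- e.g. if any vote carries a certificate the user uses that certificate, and otherwise the `no-commit' case applies --- so this is not a real obstruction.)

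Next I would push $\unlockcert$ through the consensus engine. By the Termination and Agreement properties of consensus (assumed as a black box in \Cref{sec:primitives}), the $\unlockcert$ submitted by the correct user in step \three is eventually sequenced, and all correct validators observe it in the same position (step \four). Then I would argue that each correct validator eventually runs \textsc{ProcessUnlockCert} (\Cref{alg:process-unlock-cert}) and returns an $\effectvote$. The two checks must be shown non-blocking: Check (\ref{alg:process-unlock-cert}.2) passes because $\unlockcert$ was built from a quorum of honest-verifiable votes and any embedded $\Cert$ is valid or $\none$; Check (\ref{alg:process-unlock-cert}.1) passes \emph{unless} $\unlockdb[\Okey]$ is already $\confirmed$ --- but if it is already $\confirmed$, then by the persistent-data-structure invariants some certificate (a prior $\unlockcert$ or a consensus-sequenced $\cert$) has already been executed on $\Okey$, so the object has already admitted a new transaction and the conclusion holds a fortiori. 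In the remaining case the validator executes either $\cert.\tx$ or $\noop$ in \Cref{alg:line:exec}, which in both cases increments the version of $\Okey$, and marks it $\confirmed$. Collecting $2f+1$ such $\effectvote$ messages gives the user an $\effectcert$ (step \six) finalizing a transaction on the old version of $\Okey$ and producing it at a fresh version.

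Finally I would close the argument by noting what `admit a new transaction' means: after the $\unlockcert$ is executed, $\Okey$ has been consumed and a successor object exists at version $\Version+1$ whose $\unlockdb$ entry starts at $\none$ (per the data-structure invariants), so the fast path is again available on that successor; moreover the old $\Okey$ can no longer be the subject of a competing fast-path certificate because its version is stale, which (together with \Cref{th:client-safety} and its dual) rules out the kind of replay that plagues Sui.

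The main obstacle I anticipate is the interleaving with ordinary fast-path and checkpoint processing: I must rule out a race in which a concurrently-sequenced transaction certificate or a checkpoint certificate sets $\unlockdb[\Okey] \gets \confirmed$ \emph{before} our $\unlockcert$ is processed, in a way that would cause Check (\ref{alg:process-unlock-cert}.1) to silently drop our $\unlockcert$ \emph{without} the object having progressed. The resolution is that by the Agreement property of the SMR/consensus engine all correct validators see the same total order of $\cert$'s and $\unlockcert$'s, so whichever is sequenced first is the one that executes and marks $\confirmed$ --- and in either case $\Okey$ is consumed and a successor at a higher version is produced, so the object does admit a new transaction. I would also need to handle the gas-object bookkeeping from \Cref{sec:gas} (the fresh gas object accompanying the request is itself a certified, consensus-spendable object), but this is routine given that section. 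The remaining steps --- non-blocking of the two checks, eventual delivery under eventually-reliable links, and quorum aggregation --- are straightforward bookkeeping.
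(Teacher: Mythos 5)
Your proposal is correct and follows essentially the same route as the paper: invoke \Cref{th:certificate-creation} to obtain the $\unlockcert$, then use the liveness and agreement properties of the consensus/SMR engine to argue it is sequenced and uniformly executed, yielding an updated $\Okey$. The only difference is that the paper additionally covers the fallback case where the epoch ends before the $\unlockcert$ is sequenced (all locks are then dropped at reconfiguration), while you supply more detail on the already-$\confirmed$ case and the vote-matching subtlety, both of which the paper glosses over.
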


\begin{proof}
    A correct and authorized user will eventually generate an unlock certificate by \Cref{th:certificate-creation}. Additionally from the liveness property of SMR the unlock certificate will either eventually be added as part of the SMR output or the epoch will end. If the first happens by agreement of consensus the \unlockcert will be executed by all validators, leading to the termination of the fast-unlock protocol and an updated \Okey. If the epoch ends, all locks are dropped and liveness of all \Okey are automatically available for processing.
\end{proof}
\Cref{th:simple-unlock-liveness} is sufficient for correct users as either they will manage to no-op an incorrect invocation of \Okey, drive the tranasction of a correct $\tx$ to completion, or the epoch end will automatically unblock them. This means that there will always be an available $\Okey$ to be modified.

Now that we proved that an authorized user will succeed into unblocking the $\Okey$ we also need to show that an unauthorized user will not succeed into starving legitimate users from progress through abusing fast-unlock.

\begin{theorem}[Starvation Freedom] \label{th:starvation-liveness}
    An user cannot successfully initiate a fast-unlock on any $\Okey$ it is cannot produce an $\auth$.
\end{theorem}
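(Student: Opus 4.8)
The plan is to prove the statement by contradiction, using a standard quorum‑intersection argument together with the observation that the authenticator is re‑checked on the consensus side. First I would fix what ``successfully initiate a fast‑unlock on $\Okey$'' means: by the protocol of \Cref{sec:unlock} (and its multi‑object variant in \Cref{sec:multi-unlock}), the only way the contents or version of $\Okey$ change because of the unlock machinery is that some correct validator executes an $\unlockcert$ whose object set includes $\Okey$ (line \ref{alg:line:exec} of \Cref{alg:process-unlock-cert}, resp.\ lines \ref{alg:line:exec-tx-multi}/\ref{alg:line:exec-multi} of \Cref{alg:process-unlock-cert-multi}); equivalently, an $\unlockcert$ over $\Okey$ is output by consensus and passes all the checks in \Cref{alg:process-unlock-cert}. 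So it suffices to show that no such $\unlockcert$ over $\Okey$ can exist and be executed unless the initiating user $u$ possesses a valid $\auth$ for $\Okey$.

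The core step is then as follows. An $\unlockcert$ over $\Okey$ is, by construction, an aggregate of $2f+1$ $\unlockvote$ messages on a single $\unlockreq$ with $\Okey \in \unlockreq.\Okeys$ (or $\unlockreq.\Okey = \Okey$ in the basic protocol). Since at most $f < 2f+1$ validators are Byzantine, at least one of these $2f+1$ signers is honest; a correct validator emits an $\unlockvote$ only after Check (\ref{alg:process-unlock-tx}.1) (resp.\ Check (\ref{alg:process-unlock-tx-multi}.1)) succeeds, i.e.\ only if $\valid{\unlockreq}$ holds, which by definition requires a valid authenticator $\auth$ for $\Okey$ (a transaction using $\Okey$ signed by its owner, plus evidence that the requester may act on one of the objects). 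That $\unlockreq$, carrying the valid $\auth$, is precisely the one $u$ broadcast to initiate the protocol, so $u$ did produce a valid $\auth$ — contradicting the hypothesis. To close the remaining gap, I would note that even a syntactically forged ``unlock certificate'' injected directly into consensus is rejected by every correct validator at Check (\ref{alg:process-unlock-cert}.2) (resp.\ Check (\ref{alg:process-unlock-cert-multi}.2)), which re‑verifies that it is signed by a quorum and that $\valid{\unlockcert}$ holds; hence no correct validator executes it and $\Okey$ is untouched. Together these rule out every path to a successful fast‑unlock without a valid $\auth$.

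I expect the main obstacle to be making precise what ``$u$ cannot produce an $\auth$'' means and arguing that it cannot be circumvented, rather than the quorum step, which is routine. Because $\auth$ embeds a transaction signed by the owner of $\Okey$ and, for collective objects, an authentication path witnessing that the root term $\aterm$ evaluates to true, ``cannot produce $\auth$'' must be read as: $u$ controls neither enough signing keys nor enough satisfying conditions (time predicates, cross‑chain events, thresholds) for the authenticator to hold. Ruling out forgery of such evidence rests squarely on the threat model's assumption of a computationally bounded adversary and an existentially unforgeable signature scheme. In the multi‑object case one must additionally observe that Check (\ref{alg:process-unlock-tx-multi}.1) requires the \emph{single} authenticator in $\unlockreq$ to cover \emph{every} object in $\unlockreq.\Okeys$ as well as every owned object locked by the proposed $\tx$, so partial authority over a subset of the objects does not suffice. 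Finally, I would flag that this guarantee is specific to the authenticated \fun scheme: the ``Auto‑Unlock'' variant of \Cref{sec:unlock} deliberately replaces $\auth$ with the predicate ``delay $\Delta$ has elapsed since $\Okey$ was locked'', so there the same proof goes through verbatim with $\auth$ reinterpreted as that timing predicate, trading authorization for a synchrony assumption.
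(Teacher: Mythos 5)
Your proposal is correct and follows essentially the same route as the paper: the unlock certificate requires $2f{+}1$ votes, at most $f$ of which can come from Byzantine validators, and every correct validator signs an $\unlockvote$ only after verifying $\auth$ at Check (\ref{alg:process-unlock-tx}.1), so an unauthorized user can never assemble the quorum. Your additional observations (the consensus-side re-validation at Check (\ref{alg:process-unlock-cert}.2), the reliance on signature unforgeability, and the Auto-Unlock caveat) go beyond the paper's brief argument but do not change the approach.
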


\begin{proof}
    All honest validators check the authorization vector \auth of the requesting user (\Cref{alg:line:verify-auth} in \Cref{alg:process-unlock-tx}). This means that no honest party will lock an object without an authorization, including slow parties that have not yet seen the $\Okey$ which will reject or cache the request for later processing. As a result, by the model, there will never be sufficient $\unlockvote$ to generate an $\unlockcert$ driven by an unauthorized user.
\end{proof}

\para{Generalization to multi-object unlock}
The multi-object unlock protocol can be seen as a composition of many single-object unlock protocols (one per object) as well as a single commit protocol (for the accompanied transaction). As a result, the safety of the protocol follows from the fact that objects are independent of each other so if at least one has a prior certificate then the commit flow will lead to committing that prior certificate (which iteratively applies to all objects with prior certificates). If on the other hand, no object has a prior certificate then the workflow is the combination of the simple \fun per object together with the shared-object path of committing the transactions of Sui which is safe as proven in the original Sui paper~\cite{suiL}.

Second, we explore liveness. There are two cases: (1) all objects can be unlocked, (2) one or more objects are already certified.
The first case is exactly the same as the simple protocol of Section~\ref{sec:unlock} and a proof would follow exactly the same structure. For the second case, we first look into the base case of a single object that is already certified which is already proven in the previous sections. For more than one objects we can see that since the validator adds all certificates in their reply and then processes each certificate separately when handling the unlock cert then there is no interaction between certificate processing and can be considered a batch of independent requests.

Finally, for liveness the accompanied transaction might need to acquire locks. This is also an independent invocation of the Sui fast-path. As a result if the transaction is valid it will either succeed or  blocks. In the latter case, the user will have to invoke fast-unlock again including in the set of to-unlock objects the newly blocked objects of the transaction. Given that there is a finite number of objects a user holds an unlock request will eventually succeed. 

\end{document}